\documentclass{article}
\usepackage{graphicx}
\usepackage{multirow}

\usepackage{color}
\usepackage{framed}

\newcommand{\aop}{Ann. Phys.~}
\newcommand{\cmp}{Comm. Math. Phys.~}

\newcommand{\jmp}{J. Math. Phys.~}
\newcommand{\jpa}{J. Phys. A~}

\newcommand{\njp}{New. J. Phys.~}

\newcommand{\prl}{Phys. Rev. Lett.~}
\newcommand{\pra}{Phys. Rev. A~}
\newcommand{\pre}{Phys. Rev. E~}
\newcommand{\pla}{Phys. Lett. A~}



\usepackage[T1]{fontenc}
\usepackage[sc]{mathpazo}
\usepackage{amsmath}
\usepackage{amssymb}
\usepackage{enumerate}
\usepackage{amsthm}
\usepackage{amsfonts,mathrsfs}
\usepackage[vcentermath]{youngtab}

\setlength{\topmargin}{0in} \setlength{\headheight}{0in}
\setlength{\headsep}{0.0in} \setlength{\textheight}{8.85in}
\setlength{\oddsidemargin}{0in} \setlength{\evensidemargin}{0in}
\setlength{\textwidth}{6.5in}


\usepackage{hyperref}
\hypersetup{pdfpagemode=UseNone}


\newcommand{\tinyspace}{\mspace{1mu}}

\newcommand{\op}[1]{\operatorname{#1}}

\newcommand{\abs}[1]{\left\lvert\tinyspace #1 \tinyspace\right\rvert}

\renewcommand{\det}{\operatorname{det}}

\newcommand{\setft}[1]{\mathrm{#1}}
\newcommand{\lin}[1]{\setft{L}\left(#1\right)}

\renewcommand{\vec}{\op{vec}}
\newcommand{\im}{\op{im}}

\def\complex{\mathbb{C}}
\def\real{\mathbb{R}}

\def\I{\mathbb{1}}

\def \dif {\mathrm{d}}
\def \diag {\mathrm{diag}}

\def \re {\mathrm{Re}}
\def \im {\mathrm{Im}}

\newenvironment{mylist}[1]{\begin{list}{}{
    \setlength{\leftmargin}{#1}
    \setlength{\rightmargin}{0mm}
    \setlength{\labelsep}{2mm}
    \setlength{\labelwidth}{8mm}
    \setlength{\itemsep}{0mm}}}
    {\end{list}}


\def\ot{\otimes}

\newcommand{\out}[2]{| #1\rangle\langle #2 |}

\newcommand{\Innerm}[3]{\left\langle #1 \left| #2 \right| #3 \right\rangle}


\newcommand{\pa}[1]{(#1)}
\newcommand{\Pa}[1]{\left(#1\right)}

\newcommand{\Br}[1]{\left[#1\right]}
\newcommand{\set}[1]{\{#1\}}
\newcommand{\Set}[1]{\left\{#1\right\}}


\newcommand{\bra}[1]{\langle#1|}

\newcommand{\ket}[1]{|#1\rangle}





\DeclareMathOperator{\trace}{Tr}
\newcommand{\ptr}[2]{\trace_{#1}\pa{#2}}
\newcommand{\Ptr}[2]{\trace_{#1}\Pa{#2}}

\newcommand{\Tr}[1]{\Ptr{}{#1}}












\def\cE{\mathcal{E}}
\def\cH{\mathcal{H}}\def\cI{\mathcal{I}}

\def\cX{\mathcal{X}}\def\cY{\mathcal{Y}}



\def\rS{\mathrm{S}}


\def\sC{\mathscr{C}}



\newtheorem{thrm}{Theorem}[section]

\newtheorem{prop}[thrm]{Proposition}

\theoremstyle{definition}
\newtheorem{definition}[thrm]{Definition}
\newtheorem{remark}[thrm]{Remark}

\numberwithin{equation}{section}


\newcounter{questionnumber}

\begin{document}

\title{\Large Average coherence and its typicality for random mixed quantum states}

\author{Lin Zhang\footnote{E-mail: linyz@zju.edu.cn; godyalin@163.com}\\
  {\it\small Institute of Mathematics, Hangzhou Dianzi University, Hangzhou 310018, PR~China}}
\date{}
\maketitle
\maketitle \mbox{}\hrule\mbox\\
\begin{abstract}
Wishart ensemble is a useful and important random matrix model used
in diverse fields. By realizing induced random mixed quantum states
as Wishart ensemble with the fixed-trace one, using matrix integral
technique we give a fast track to the average coherence for random
mixed quantum states induced via partial-tracing of the
Haar-distributed bipartite pure states. As a direct consequence of
this result, we get a compact formula of the average subentropy of
random mixed states. These obtained compact formulae extend our
previous work.

\textbf{Keywords:} Random quantum state; Quantum coherence; Wishart
ensemble; Entropy; Subentropy
\end{abstract}
\maketitle \mbox{}\hrule\mbox

\section{Introduction}

Quantum coherence, due to the superposition rule, is an important
ingredient in quantum information processing and plays a pivotal
role in diverse fields such as quantum thermodynamics
\cite{Brandao2015,Rodr2013} and quantum biology
\cite{Levi2014,Lloyd2011,Plenio2008}. Quantum coherence is the basis
of single particle interferometry, and it gives coherence the status
of a resource and makes necessary to develop a solid framework
allowing to asses and quantify this property \cite{Baumgratz2014}.
Quantum entanglement, due to the tensor product structure of
composite quantum systems, is also another fundamental feature of
quantum mechanics. It is also necessary resource in many quantum
information processing tasks such as superdense coding, quantum
teleportation etc. \cite{watrous2016preprint}. Recently researchers
contributes much effort to connect quantum coherence with
entanglement and quantum discord, a kind of quantum correlation
containing entanglement as a proper subset. Streltsov \emph{et. al}
\cite{Streltsov2015} have connected quantum coherence with
entanglement, and have shown that any degree of coherence with
respect to some reference basis can be converted to entanglement via
incoherent operations. Ma \emph{et. al} \cite{Ma2016} have proven
that the creation of quantum discord with multipartite incoherent
operations is bounded by the amount of quantum coherence consumed in
its subsystems during the process.

In the last years, many efforts have been made towards the research
of quantum correlations of random quantum states
\cite{Hall1998,Wootters1990}. In quantum information theory, many
quantities such as the quantum entanglement and the diagonal entropy
of a density matrix \cite{Giraud2016} has been proved very useful.
In particular, typicality of some quantity can reduce computational
complexity of it \cite{Hayden2006}. For example, the typicality
entanglement of pure bipartite states sampled randomly according to
the uniform Haar measure provides an explanation to the equal a
priori postulate of the statistical physics \cite{Goldstein2006}. To
the knowledge of the author, the distribution of entanglement among
two subsystem of a large quantum system has been a subject of
interest among physicists and mathematicians for a long time, and
many interesting results have been obtained, but however, similar
consideration for quantum coherence is still missing. In the present
work, we are concerned with the statistical behavior of coherence of
a subsystem of a large quantum system. Specifically, although the
authors \cite{Zhang2015} make an attempt to calculating the average
coherence of \emph{induced} random mixed state ensemble
\cite{Karol2001} by brute force, the calculation is very complicated
and also tedious. This motivates me to find a more compact approach
to the typicality of quantum coherence. Luckily, we find a simple
approach to get a compact formula for the average coherence quickly.
Although the topic of the present paper was already investigated and
some results was obtained \cite{Zhang2015}, the method used in the
paper is new and very different from that of \cite{Zhang2015}. The
authors of \cite{Zhang2015} by calculating the average subentropy
firstly and then deriving the average coherence by using the
obtained formula for the average subentropy. In obtaining the final
compact forms for the average subentropy and the average coherence,
they have shown some ingenious combinatorial identities (we can see
this from the very recent published version). But, however, what is
the difference is that we calculate directly the average coherence
(more simpler than the method in \cite{Zhang2015}) and get a compact
form for the average subentropy as a by-product. Based on this
elegant formula for the average subentropy, we get the fact that as
the dimension of the system to be considered increases, the average
subentropy of random mixed states approaches to the maximum value of
the subentropy which is attained for the maximally mixed state.

Let us fix some notations before proceeding. For a given density
matrix $\rho$ (i.e. nonnegative square matrix of or $m$ with unit
trace), its von Neumann entropy is defined as $S(\rho):=
-\Tr{\rho\ln\rho}$, where $\ln\rho$ is in the sense of the
functional calculus of $\rho$. In fact,
$\rS(\rho)=-\sum_j\lambda_j(\rho)\ln\lambda_j(\rho)$, where
$\lambda_j(\rho)$ stands for the eigenvalues of $\rho$. Quantum
relative entropy of coherence (in short quantum coherence in the
present paper) in a state $\rho$ is given by \cite{Baumgratz2014}:
$\sC(\rho) = S(\rho_\diag) - S(\rho)$. In this paper, we will
calculate exactly the average coherence $\overline{\sC}$ for random
mixed quantum states. Then the typicality of coherence is obtained
immediately.

The paper is organized as follows. In Sect.~\ref{sect:matrix-delta},
We recall the Dirac delta function and its extension to matrix delta
function. Then, We derive the distribution of diagonal part of
Wishart ensemble, as the marginal distribution of matrix elements of
Wishart ensemble in Sect.~\ref{sect:wishart}, and by realizing the
induced random mixed quantum states as Wishart ensemble with
fixed-trace one, we obtain the distribution of diagonal part of
induced random mixed quantum states. In this section, we also
calculate the average entropy of diagonal part of random mixed
quantum states. We present our main results (i.e.,
Theorems~\ref{th:ave-coh}, \ref{th:ave-suben}, and \ref{th:typ-coh})
in Sect.~\ref{sect:main-results}. In Sect.~\ref{sect:mixing}, we
discuss the average coherence of the mixing of random mixed quantum
states. Finally, we give concluding remarks in
Sect.~\ref{sect:conclusion}.

\section{Matrix delta function}\label{sect:matrix-delta}

Although the matrix delta function has already been used in the
literatures, but there is no formal and rigorous treatment, to my
best knowledge. For reader's convenience, we will give a complete
detail along this line. We recall that Dirac delta function
$\delta(x)$ \cite{Hoskins2009} is defined by
\begin{eqnarray}
\delta(x) = \begin{cases} +\infty,&\text{if } x=0;\\0, &\text{if }
x\neq 0.
\end{cases}
\end{eqnarray}
The Fourier integral representation of Dirac delta function
\begin{eqnarray}
\delta(x) = \frac1{2\pi}\int^{+\infty}_{-\infty}
e^{\mathrm{i}tx}\dif t\quad (x\in\real)
\end{eqnarray}
cab be extended to the matrix case.
\begin{definition}[The matrix delta
function] (i) For an $m\times n$ complex matrix $Z=[z_{ij}]$, the
matrix delta function $\delta(Z)$ is defined as
\begin{eqnarray}
\delta(Z):=\prod^m_{i=1}\prod^n_{j=1}\delta\Pa{\re(z_{ij})}\delta\Pa{\im(z_{ij})}.
\end{eqnarray}
(ii) For an $m\times m$ Hermitian complex matrix $X=[x_{ij}]$, the
matrix delta function $\delta(X)$ is defined as
\begin{eqnarray}
\delta(X):=\prod_j \delta(x_{jj})\prod_{i<
j}\delta\Pa{\re(x_{ij})}\delta\Pa{\im(x_{ij})}.
\end{eqnarray}
\end{definition}
From the above definition, we see that the matrix delta function of
a complex matrix is equal to the product of one-dimensional delta
functions over the independent real and imaginary parts of this
complex matrix. The following proposition is very important in this
paper.

\begin{prop}[The Fourier integral representation of the matrix delta function]
For an $m\times m$ Hermitian complex matrix $X$, we have
\begin{eqnarray}\label{eq:matrix-delta-function}
\delta(X) = \frac1{2^m\pi^{m^2}}\int e^{\mathrm{i}\Tr{TX}}[\dif T],
\end{eqnarray}
where $T=[t_{ij}]$ is also an $m\times m$ Hermitian complex matrix,
and $[\dif T]:=\prod_j\dif t_{jj}\prod_{i<j}\dif \re(t_{ij})\dif
\im(t_{ij})$.
\end{prop}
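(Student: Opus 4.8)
The plan is to reduce the matrix integral to a product of the familiar scalar Fourier representations $\int_{-\infty}^{+\infty} e^{\mathrm{i}tx}\dif t = 2\pi\delta(x)$, one factor for each of the $m^2$ independent real coordinates parametrizing the Hermitian matrix $X$ (the $m$ real diagonal entries together with the real and imaginary parts of the $\binom{m}{2}$ strictly-upper-triangular entries). The entire argument hinges on first rewriting $\Tr{TX}$ as a real bilinear form in these coordinates, with no cross terms, so that the exponential $e^{\mathrm{i}\Tr{TX}}$ splits into a product.

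First I would expand $\Tr{TX} = \sum_{i,j} t_{ij}x_{ji}$ and separate the diagonal ($i=j$) from the off-diagonal part. Grouping each off-diagonal pair $(i,j)$ and $(j,i)$ with $i<j$ and invoking Hermiticity $t_{ji} = \overline{t_{ij}}$, $x_{ji} = \overline{x_{ij}}$, the two terms combine as $t_{ij}\overline{x_{ij}} + \overline{t_{ij}}x_{ij} = 2\re\big(t_{ij}\overline{x_{ij}}\big) = 2\big(\re(t_{ij})\re(x_{ij}) + \im(t_{ij})\im(x_{ij})\big)$. Since the diagonal entries of a Hermitian matrix are real, this yields
\[
\Tr{TX} = \sum_j t_{jj}x_{jj} + 2\sum_{i<j}\big(\re(t_{ij})\re(x_{ij}) + \im(t_{ij})\im(x_{ij})\big),
\]
a diagonal sum over exactly the $m^2$ real coordinates with respect to which $[\dif T]$ is the product of differentials.

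Because the exponent is a sum with no cross terms and $[\dif T]$ is the matching product measure, the integral factorizes into $m^2$ one-dimensional integrals. Each diagonal factor gives $\int e^{\mathrm{i}t_{jj}x_{jj}}\dif t_{jj} = 2\pi\delta(x_{jj})$. Each off-diagonal factor carries the extra coefficient $2$ inside the exponent, so applying the scaling identity $\delta(2x) = \tfrac12\delta(x)$ produces
\[
\int e^{2\mathrm{i}\re(t_{ij})\re(x_{ij})}\dif\re(t_{ij}) = 2\pi\,\delta\big(2\re(x_{ij})\big) = \pi\,\delta\big(\re(x_{ij})\big),
\]
and identically for the imaginary part.

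Finally I would collect the constants. The $m$ diagonal factors contribute $(2\pi)^m = 2^m\pi^m$, while the $m(m-1)$ off-diagonal factors (real and imaginary parts of the $\binom{m}{2}$ entries) each contribute one power of $\pi$, giving $\pi^{m(m-1)}$; multiplying yields $2^m\pi^{m^2}$, precisely the prefactor in \eqref{eq:matrix-delta-function}. The surviving product $\prod_j\delta(x_{jj})\prod_{i<j}\delta(\re(x_{ij}))\delta(\im(x_{ij}))$ equals $\delta(X)$ by definition, which completes the identity after dividing through. The only genuinely delicate point is the bookkeeping of the factor $2$ forced by Hermiticity in each off-diagonal coordinate: it is exactly this factor, routed through the delta-scaling rule, that converts the naive $(2\pi)^{m^2}$ into the correct normalization $2^m\pi^{m^2}$, so the count of twos deserves careful attention.
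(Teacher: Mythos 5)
Your proposal is correct and follows essentially the same route as the paper's own proof: expand $\Tr{TX}$ into the $m^2$ independent real coordinates using Hermiticity, factorize the integral into one-dimensional Fourier representations of $\delta$, and absorb the factor of $2$ in each off-diagonal exponent via $\delta(2x)=\tfrac12\delta(x)$ to arrive at the prefactor $2^m\pi^{m^2}$. No gaps.
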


\begin{proof}
Indeed, we know that
\begin{eqnarray*}
\Tr{TX} &=& \sum_{j=1}^m t_{jj}x_{jj} + \sum_{i\neq j}\Pa{\bar
t_{ij}x_{ij}}= \sum_{j=1}^m \re(t_{jj})\re(x_{jj}) +
\sum_{1\leqslant i<
j\leqslant m}\Pa{\bar t_{ij}x_{ij} + t_{ij}\bar x_{ij}}\\
&=&\sum_{j=1}^m t_{jj}x_{jj} + \sum_{1\leqslant i< j\leqslant
m}2\Pa{\re(t_{ij})\re(x_{ij}) + \im(t_{ij})\im(x_{ij})},
\end{eqnarray*}
implying that
\begin{eqnarray*}
&&\int e^{\mathrm{i}\Tr{TX}}[\dif T] \\
&&= \prod_{j=1}^m \int \exp\Pa{\mathrm{i}t_{jj}x_{jj}}\dif
t_{jj}\prod_{1\leqslant i< j\leqslant m}\int
\exp\Pa{\mathrm{i}\re(t_{ij})\Pa{2\re(x_{ij})}}\dif
\re(t_{ij})\int\exp\Pa{\mathrm{i}\im(t_{ij}) \Pa{2\im(x_{ij})}}\dif
\im(t_{ij})\\
&&= \prod^m_{j=1}2\pi\delta\Pa{x_{jj}}\times \prod_{1\leqslant
i<j\leqslant m}2\pi\delta\Pa{2\re(x_{ij})}2\pi\delta\Pa{2\im(x_{ij})}\\
&&=\prod^m_{j=1}2\pi\delta\Pa{x_{jj}}\times \prod_{1\leqslant
i<j\leqslant m}\pi\delta\Pa{\re(x_{ij})}\pi\delta\Pa{\im(x_{ij})}\\
&&=\Pa{2\pi}^m\Pa{\pi^2}^{\binom{m}{2}}\prod_j\delta(x_{jj})\prod_{i< j}\delta\Pa{\re(x_{ij})}\delta\Pa{\im(x_{ij})}\\
&&= 2^m\pi^{m^2}\delta(X).
\end{eqnarray*}
Therefore we get the desired identity.
\end{proof}

\begin{remark}
Indeed, since
\begin{eqnarray*}
\Tr{T^{\mathrm{off}}X^{\mathrm{off}}} =
\sum_{i<j}2(\re(t_{ij})\re(x_{ij}) + \im(t_{ij})\im(x_{ij}))
\end{eqnarray*}
and
\begin{eqnarray*}
[\dif T^{\mathrm{off}}] = \prod_{i<j}\dif\re(t_{ij})\dif
\im(t_{ij}),
\end{eqnarray*}
it follows that
\begin{eqnarray*}
&&\int[\dif T^{\mathrm{off}}]\exp\Pa{\mathrm{i}\Tr{T^{\mathrm{off}}X^{\mathrm{off}}}}\\
&&=
\prod_{i<j}\int\dif\re(t_{ij})\exp\Pa{\mathrm{i}\re(t_{ij})(2\re(x_{ij}))}\int\dif\im(t_{ij})\exp\Pa{\mathrm{i}\im(t_{ij})(2\im(x_{ij}))}\\
&&=\prod_{i<j}2\pi\delta(2\re(x_{ij}))\cdot2\pi\delta(2\im(x_{ij}))
=
\prod_{i<j}\pi\delta(\re(x_{ij}))\pi\delta(\im(x_{ij}))\\
&&=\pi^{2\binom{m}{2}}\prod_{i<j}\delta(\re(x_{ij}))\delta(\im(x_{ij}))
= \pi^{m(m-1)}\delta(X^{\mathrm{off}}).
\end{eqnarray*}
From the above discussion, we see that
\eqref{eq:matrix-delta-function} can be separated into two
identities below:
\begin{eqnarray}
\delta(X^\diag) &=& \frac1{(2\pi)^m}\int[\dif
T^\diag]e^{\mathrm{i}\Tr{T^\diag X^\diag}},\\
\delta(X^\mathrm{off}) &=& \frac1{\pi^{m(m-1)}}\int[\dif
T^\mathrm{off}]e^{\mathrm{i}\Tr{T^\mathrm{off}X^\mathrm{off}}}.
\end{eqnarray}
\end{remark}

\section{Wishart ensemble}\label{sect:wishart}

We use the notation $x\sim N(\mu,\sigma^2)$ to indicate a Gaussian
random real variable $x$ with mean $\mu$ and variance $\sigma^2$.
Let $Z$ denote an $m\times n(m\leqslant n)$ complex random matrix
\cite{Hiai2000,Mehta}. These elements are independent complex random
variables subject to
$N_\complex(0,1)=N(0,\frac12)+\mathrm{i}N(0,\frac12)$ with Gaussian
densities:
\begin{eqnarray}
\frac1\pi \exp\Pa{-\abs{z_{ij}}^2},
\end{eqnarray}
where $\re(z_{ij}),\im(z_{ij})$ are i.i.d. Gaussian random real
variables with mean 0 and variance $\frac12$.

\begin{definition}[Wishart matrices, \cite{James1964}]
With $m\times n$ random matrices $Z$ specified as above, define
complex Wishart ensemble as consisting of matrices $W=ZZ^\dagger$.
The matrices $W=ZZ^\dagger$ are referred to as (uncorrelated)
\emph{Wishart matrices}.
\end{definition}
As chosen previously, $m\leqslant n$ for definiteness. The
probability distribution followed by $Z$ is
\begin{eqnarray}
\varphi(Z)\propto \exp\Pa{-\Tr{ZZ^\dagger}}.
\end{eqnarray}
Indeed, let $Z=[z_{ij}]$ be a complex random matrix, where
$z_{ij}=\re(z_{ij})+\sqrt{-1}\im(z_{ij})$ with
$\re(z_{ij}),\im(z_{ij})\sim N(0,\frac12)$. The probability
distribution of $Z$ is just the joint distribution of all matrix
elements $z_{ij}$ of $Z$. Thus
\begin{eqnarray*}
\varphi(Z)=\prod^m_{i=1}\prod^n_{j=1}\mathrm{Pr}(z_{ij})=\prod^m_{i=1}\prod^n_{j=1}\frac1\pi\exp\Pa{-\abs{z_{ij}}^2}=\frac1{\pi^{mn}}\exp\Pa{-\sum^m_{i=1}\sum^n_{j=1}\abs{z_{ij}}^2}.
\end{eqnarray*}
Thus, we have
\begin{eqnarray}
\varphi(Z) = \frac1{\pi^{mn}}\exp\Pa{-\Tr{ZZ^\dagger}}.
\end{eqnarray}
Then the distribution of Wishart matrices $W$ is given by
\begin{eqnarray}\label{eq:dist-for-W}
\mathbf{Q}(W) = \int[\dif Z]\delta(W-ZZ^\dagger)\varphi(Z),
\end{eqnarray}
where $[\dif Z]=\prod^m_{i=1}\prod^n_{j=1}\dif z_{ij}$ and $\dif z
=\dif\re(z)\dif\im(z)$ for $z\in\complex$. With the matrix delta
function, we can rewrite the expression \eqref{eq:dist-for-W} as
\cite{Forrester2010}:
\begin{eqnarray}
\mathbf{Q}(W) = \frac1{2^m\pi^{m^2}}\int[\dif
T]\frac{\exp\Pa{\mathrm{i}\Tr{TW}}}{\det^n(\I+\mathrm{i}T)}.
\end{eqnarray}

\subsection{The distribution of diagonal part of Wishart ensemble}

Now break up $W=[w_{ij}]$ as $W=W^\diag+W^{\mathrm{off}}$, where
$W^\diag$ and $W^{\mathrm{off}}$ are the diagonal part and
off-diagonal part of $W$, respectively. Clearly this decomposition
is orthogonal with respect to Hilbert-Schmidt inner product in
operator space. The distribution of diagonal entries of Wishart
ensemble can be calculated via the following integral:
\begin{eqnarray*}
&&\mathbf{q}(W^\diag) = \int\mathbf{Q}(W)[\dif W^{\mathrm{off}}]\\
&&=\frac1{(2\pi)^m}\int[\dif T] \frac{\exp\Pa{\mathrm{i}\Tr{T^\diag
W^\diag}}}{\det^n(\I+\mathrm{i}T)} \Br{\frac1{\pi^{m(m-1)}}\int[\dif
W^{\mathrm{off}}]\exp\Pa{\mathrm{i}\Tr{T^{\mathrm{off}}W^{\mathrm{off}}}}}\\
&&=\frac1{(2\pi)^m}\int[\dif T]
\delta(T^{\mathrm{off}})\frac{\exp\Pa{\mathrm{i}\Tr{T^\diag
W^\diag}}}{\det^n(\I+\mathrm{i}T)}.
\end{eqnarray*}
Note that, in the third equality, we use the fact that
\begin{eqnarray}
\delta(T^{\mathrm{off}}) = \frac1{\pi^{m(m-1)}}\int[\dif
W^{\mathrm{off}}]\exp\Pa{\mathrm{i}\Tr{W^{\mathrm{off}}T^{\mathrm{off}}}}.
\end{eqnarray}
Now we have obtained that
\begin{eqnarray}\label{eq:diagonal-part-dist}
\mathbf{q}(W^\diag) = \frac1{(2\pi)^m}\int[\dif T]
\delta(T^{\mathrm{off}})\frac{\exp\Pa{\mathrm{i}\Tr{T^\diag
W^\diag}}}{\det^n(\I+\mathrm{i}T)}.
\end{eqnarray}
Next, we calculate the integral in \eqref{eq:diagonal-part-dist}. To
this end, denote
\begin{eqnarray}\label{eq:diagonal-integral}
\cI_{m,n}(W^\diag_m) := \int[\dif T_m]
\delta(T^{\mathrm{off}}_m)\frac{\exp\Pa{\mathrm{i}\Tr{T^\diag_m
W^\diag_m}}}{\det^n(\I_m+\mathrm{i}T_m)}
\end{eqnarray}
and we follow the approach taken by Janik \cite{Janik2003},
partition $T_m$ as $2\times2$ block matrix:
\begin{eqnarray*}
T_m = \Br{\begin{array}{cc}
            T_{m-1} & \ket{u} \\
            \bra{u} & t_{mm}
          \end{array}
}.
\end{eqnarray*}
Thus
\begin{eqnarray*}
\I_m+\mathrm{i}T_m = \Br{\begin{array}{cc}
            \I_{m-1}+\mathrm{i}T_{m-1} & \mathrm{i}\ket{u} \\
            \mathrm{i}\bra{u} & 1+\mathrm{i}t_{mm}
          \end{array}
}.
\end{eqnarray*}
By employing \emph{Schur determinant formula} \cite{Zhang2005}, we
get that
\begin{eqnarray}
\det(\I_m+\mathrm{i}T_m) =
\det(\I_{m-1}+\mathrm{i}T_{m-1})\Br{\mathrm{i}t_{mm} +1+
\Innerm{u}{\Pa{\I_{m-1}+\mathrm{i}T_{m-1}}^{-1}}{u}}.
\end{eqnarray}
Apparently,
\begin{eqnarray}
[\dif T_m]&=&[\dif T_{m-1}][\dif u]\dif
t_{mm},\\
\delta(T^{\mathrm{off}}_m) &=&
\delta(T^{\mathrm{off}}_{m-1})\delta(u),\\
\exp\Pa{\mathrm{i}\Tr{T^\diag_m W^\diag_m}} &=&
\exp\Pa{\mathrm{i}\Tr{T^\diag_{m-1}
W^\diag_{m-1}}}\exp\Pa{\mathrm{i}t_{mm} w_{mm}}.
\end{eqnarray}
Now \eqref{eq:diagonal-integral} can be transformed into the
following form:
\begin{eqnarray}
\cI_{m,n}(W^\diag_m) &=& \int[\dif T_{m-1}]
\delta(T^{\mathrm{off}}_{m-1})\frac{\exp\Pa{\mathrm{i}\Tr{T^\diag_{m-1}
W^\diag_{m-1}}}}{\det^n(\I_{m-1}+\mathrm{i}T_{m-1})}\int[\dif
u]\delta(u)\notag\\
&&\times\int\dif t_{mm} \frac{\exp\Pa{\mathrm{i}t_{mm}
w_{mm}}}{\Pa{{\mathrm{i}t_{mm} +1+
\Innerm{u}{\Pa{\I_{m-1}+\mathrm{i}T_{m-1}}^{-1}}{u}}}^n}.
\end{eqnarray}
Using the following complex integral formula:
\begin{eqnarray}
\int^{+\infty}_{-\infty}\dif x \frac{e^{\mathrm{i}xt}}{(x-a)^n} =
\frac{2\pi\mathrm{i}^n}{\Gamma(n)}t^{n-1}e^{\mathrm{i}ta}
\end{eqnarray}
or
\begin{eqnarray}
\int^{+\infty}_{-\infty}\dif x
\frac{e^{\mathrm{i}xt}}{(\mathrm{i}x+a)^\nu} =
\frac{2\pi}{\Gamma(\nu)}t^{\nu-1}e^{-ta},
\end{eqnarray}
we have
\begin{eqnarray}
&&\int\dif t_{mm} \frac{\exp\Pa{\mathrm{i}t_{mm}
w_{mm}}}{\Pa{{\mathrm{i}t_{mm} +1+
\Innerm{u}{\Pa{\I_{m-1}+\mathrm{i}T_{m-1}}^{-1}}{u}}}^n}\\
&&=\frac1{\mathrm{i}^n}\int^{+\infty}_{-\infty}\dif
t_{mm}\frac{\exp\Pa{\mathrm{i}t_{mm} w_{mm}}}{\Pa{{t_{mm}
-\mathrm{i}\Pa{1+
\Innerm{u}{\Pa{\I_{m-1}+\mathrm{i}T_{m-1}}^{-1}}{u}}}}^n}\\
&&=\frac1{\mathrm{i}^n}
\frac{2\pi\mathrm{i}^n}{\Gamma(n)}w_{mm}^{n-1}\exp\Br{-w_{mm}\Pa{1+
\Innerm{u}{\Pa{\I_{m-1}+\mathrm{i}T_{m-1}}^{-1}}{u}}},
\end{eqnarray}
that is,
\begin{eqnarray}
&&\int\dif t_{mm} \frac{\exp\Pa{\mathrm{i}t_{mm}
w_{mm}}}{\Pa{{\mathrm{i}t_{mm} +1+
\Innerm{u}{\Pa{\I_{m-1}+\mathrm{i}T_{m-1}}^{-1}}{u}}}^n}\\
&&= \frac{2\pi}{\Gamma(n)}w_{mm}^{n-1}e^{-w_{mm}}\exp\Pa{-w_{mm}
\Innerm{u}{\Pa{\I_{m-1}+\mathrm{i}T_{m-1}}^{-1}}{u}}.
\end{eqnarray}
Then
\begin{eqnarray}
\cI_{m,n}(W^\diag_m) &=&
\frac{2\pi}{\Gamma(n)}w_{mm}^{n-1}e^{-w_{mm}}\int[\dif T_{m-1}]
\delta(T^{\mathrm{off}}_{m-1})\frac{\exp\Pa{\mathrm{i}\Tr{T^\diag_{m-1}
W^\diag_{m-1}}}}{\det^n(\I_{m-1}+\mathrm{i}T_{m-1})}\notag\\
&&\times\int[\dif u]\delta(u)\exp\Pa{-w_{mm}
\Innerm{u}{\Pa{\I_{m-1}+\mathrm{i}T_{m-1}}^{-1}}{u}}.
\end{eqnarray}
Therefore
\begin{eqnarray}
\cI_{m,n}(W^\diag_m) =
\frac{2\pi}{\Gamma(n)}w_{mm}^{n-1}e^{-w_{mm}}\int[\dif T_{m-1}]
\delta(T^{\mathrm{off}}_{m-1})\frac{\exp\Pa{\mathrm{i}\Tr{T^\diag_{m-1}
W^\diag_{m-1}}}}{\det^n(\I_{m-1}+\mathrm{i}T_{m-1})},
\end{eqnarray}
where we used the fact that
\begin{eqnarray}
\int[\dif u]\delta(u)\exp\Pa{-w_{mm}
\Innerm{u}{\Pa{\I_{m-1}+\mathrm{i}T_{m-1}}^{-1}}{u}} = 1.
\end{eqnarray}
That is,
\begin{eqnarray}
\cI_{m,n}(W^\diag_m) =
\frac{2\pi}{\Gamma(n)}w_{mm}^{n-1}e^{-w_{mm}}\cI_{m-1,n}(W^\diag_{m-1}).
\end{eqnarray}
By this, we finally get
\begin{eqnarray}
\cI_{m,n}(W^\diag_m) =
\frac{(2\pi)^m}{\Gamma(n)^m}\Pa{\prod^m_{j=1}w_{jj}^{n-1}}\exp\Pa{-\sum^m_{j=1}w_{jj}}.
\end{eqnarray}
In summary, we have the distribution of diagonal part of Wishart
ensemble, which can be presented as the following proposition.
\begin{prop}[The distribution of diagonal part of Wishart
ensemble] The distribution of diagonal part of Wishart ensemble is
given by the following formula:
\begin{eqnarray}
\mathbf{q}(W^\diag) =
\frac1{\Gamma(n)^m}\Pa{\prod^m_{j=1}w_{jj}^{n-1}}\exp\Pa{-\sum^m_{j=1}w_{jj}}.
\end{eqnarray}
\end{prop}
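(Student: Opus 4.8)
The plan is to bypass the matrix-integral machinery entirely and read the answer off directly from the definition $W=ZZ^\dagger$ together with the Gaussian law of $Z$. The starting observation is that the $j$-th diagonal entry is $w_{jj}=(ZZ^\dagger)_{jj}=\sum_{k=1}^n\abs{z_{jk}}^2$, so that $w_{jj}$ depends only on the $j$-th row $(z_{j1},\dots,z_{jn})$ of $Z$. Since distinct rows of $Z$ are disjoint collections of independent entries, the random variables $w_{11},\dots,w_{mm}$ are mutually independent, and it therefore suffices to determine the law of a single $w_{jj}$ and then take the product over $j$.

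First I would identify the law of each summand $\abs{z_{jk}}^2=\re(z_{jk})^2+\im(z_{jk})^2$. Because $\re(z_{jk})$ and $\im(z_{jk})$ are independent $N(0,\tfrac12)$ variables, a short change of variables (or the standard fact that $\abs{z}^2$ for $z\sim N_\complex(0,1)$ is exponential) shows that $\abs{z_{jk}}^2$ has density $e^{-t}$ on $t\geq0$, i.e. it is $\mathrm{Exp}(1)$. Next I would add up the $n$ independent exponential summands making up $w_{jj}$; their sum is $\mathrm{Gamma}(n,1)$, with density $\tfrac1{\Gamma(n)}w^{n-1}e^{-w}$, which I would obtain either by an $n$-fold convolution or by recognizing the moment generating function $(1-s)^{-n}$. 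Multiplying the $m$ independent one-dimensional densities then yields exactly $\mathbf{q}(W^\diag)=\Gamma(n)^{-m}\Pa{\prod_{j=1}^m w_{jj}^{n-1}}\exp\Pa{-\sum_{j=1}^m w_{jj}}$. In this route there is essentially no obstacle: the only points needing verification are the exponential law of $\abs{z_{jk}}^2$, the Gamma convolution, and the independence across rows, all of which are routine.

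If instead one insists on the Fourier route already set up in the excerpt — writing $\mathbf{q}(W^\diag)=\frac1{(2\pi)^m}\cI_{m,n}(W^\diag)$ with $\cI_{m,n}$ as in \eqref{eq:diagonal-integral} — the argument is the recursion indicated above: partition $T_m$ into the block form with corner $t_{mm}$ and vector $\ket u$, apply the Schur determinant formula, and carry out the $t_{mm}$ integral by the stated contour formula. The step I expect to be the only real obstacle in that approach is the $\ket u$ integration: one must check that the factor $\delta(T_m^{\mathrm{off}})$ forces $\ket u$ to vanish through $\delta(u)$, so that the Gaussian-type factor $\exp\Pa{-w_{mm}\Innerm{u}{\Pa{\I_{m-1}+\mathrm{i}T_{m-1}}^{-1}}{u}}$ integrates to $1$ and the recursion closes as $\cI_{m,n}=\frac{2\pi}{\Gamma(n)}w_{mm}^{n-1}e^{-w_{mm}}\cI_{m-1,n}$. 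Both routes terminate in the same product form; I would favor the direct probabilistic one as shorter and more transparent, reserving the integral identity for its genuine use in the off-diagonal computations elsewhere in the paper.
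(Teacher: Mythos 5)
Your direct probabilistic route is correct and is genuinely different from the proof the paper actually gives for this proposition. The paper integrates out $W^{\mathrm{off}}$ using the Fourier representation of the matrix delta function and then evaluates $\cI_{m,n}$ by the block-partition/Schur-complement recursion you describe in your second paragraph --- including the step you flag, where $\delta(T_m^{\mathrm{off}})=\delta(T_{m-1}^{\mathrm{off}})\delta(u)$ forces the $\ket{u}$-integral of $\exp\Pa{-w_{mm}\Innerm{u}{(\I_{m-1}+\mathrm{i}T_{m-1})^{-1}}{u}}$ to equal $1$, closing the recursion as $\cI_{m,n}=\frac{2\pi}{\Gamma(n)}w_{mm}^{n-1}e^{-w_{mm}}\cI_{m-1,n}$. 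Your preferred argument instead observes that $w_{jj}=\sum_{k=1}^n\abs{z_{jk}}^2$ depends only on the $j$-th row of $Z$, that $\abs{z_{jk}}^2\sim\mathrm{Exp}(1)$ under the density $\pi^{-1}e^{-\abs{z}^2}$, and that the $n$-fold convolution gives the $\mathrm{Gamma}(n,1)$ density $\Gamma(n)^{-1}w^{n-1}e^{-w}$ for each independent $w_{jj}$; the product over $j$ is exactly the claimed formula. This is shorter, elementary, and makes the independence of the diagonal entries transparent; the paper itself concedes in a remark that ``we can derive directly this result from the definition of Wishart matrices'' (and essentially does so later when deriving the Dirichlet law of $\rho_\diag$), but it deliberately runs the matrix-integral computation to exhibit the diagonal law as a marginal of $\mathbf{Q}(W)$ and to exercise the delta-function machinery used throughout Section~2. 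What the paper's route buys is a template that generalizes to situations where the entries of $W$ are not manifestly built from independent rows; what your route buys is brevity and probabilistic clarity. Both are sound.
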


\begin{remark}
In the above process, we calculate from first principle the joint
distribution of diagonal part of Wishart matrices. What we emphasize
here is that the distribution of diagonal part of Wishart matrices
is the marginal distribution of the distribution of Wishart
matrices. Of course, we can derive directly this result from the
definition of Wishart matrices.
\end{remark}

\subsection{Wishart ensemble and random mixed quantum states}

For the mathematical treatment of a quantum system, one usually
associates with it a Hilbert space whose vectors describe the states
of that system. In our situation, we associate with $A$ and $B$ two
complex Hilbert spaces $\cH_A$ and $\cH_B$ with respective
dimensions $m$ and $n$, which are assumed here to be such that $m
\leqslant n$. In these settings, the vectors of the spaces $\cH_A$
and $\cH_B$ describe the states of the systems $A$ and $B$. Those of
the tensorial product $\cH_A\ot\cH_B$ (of dimension $mn$) then
describe the states of the combined system $AB$.

It will be helpful throughout this paper to make use of a simple
correspondence between the linear operator spaces $\lin{\cX,\cY}$
and $\cY\ot\cX$, for given complex Euclidean spaces $\cX$ and $\cY$.
We define the mapping
\begin{eqnarray*}
\vec:\lin{\cX,\cY}\to \cY\ot\cX
\end{eqnarray*}
to be the linear mapping that represents a change of bases from the
standard basis of $\lin{\cX,\cY}$ to the standard basis of
$\cY\ot\cX$. Specifically, in the Dirac notation, this mapping
amounts to flipping a bra to a ket, we define
\begin{eqnarray*}
\vec(\out{i}{j}) = \ket{ij},
\end{eqnarray*}
at this point the mapping is determined for every
$M=\sum_{i,j}M_{ij}\out{i}{j}\in\lin{\cX,\cY}$ by linearity
\cite{watrous2016preprint}. For convenience, we denote
$\vec(M):=\ket{M}$. Clearly $\ptr{\cX}{\out{M}{N}} = MN^\dagger$ for
$M,N\in\lin{\cX,\cY}$. In this problem, we assume that
$\cX=\cH_B=\complex^n$ and $\cY=\cH_A=\complex^m$.

For Wishart matrices $W$ on $\complex^m$, it can be considered as
the reduced state of a purified random vector $\ket{Z}$ with random
coordinates $z_{ij},i=1,\ldots,m;j=1,\ldots,n$, the probability
distribution of which being the uniform distribution on the unit
sphere of $\complex^m\ot\complex^n$. That is, $W=ZZ^\dagger =
\Ptr{\complex^n}{\out{Z}{Z}}$. Since any random mixed quantum states
can be generated by Wishart matrices with a fixed trace one, it
follows that the distribution of random mixed quantum states is
given by the following via $\rho=\frac{W}{\Tr{W}}$:
\begin{eqnarray}
\mathbf{P}(\rho) \propto \delta(1-\Tr{W})\mathbf{Q}(W).
\end{eqnarray}
The distribution of random mixed quantum states is given by the
following:
\begin{eqnarray}\label{eq:wishar-of-fixed-trace}
\mathbf{P}(\rho) \propto \delta(1-\Tr{\rho})\int[\dif
Z]\delta(\rho-ZZ^\dagger)\varphi(Z).
\end{eqnarray}
That is \cite{Karol2001},
\begin{eqnarray}
\mathbf{P}(\rho) \propto \delta(1-\Tr{\rho})\det^{n-m}(\rho).
\end{eqnarray}
In view of this, we get
\begin{prop}
The distribution of diagonal part $\rho_\diag$ of random mixed
quantum states is given by the following:
\begin{eqnarray}
\mathbf{p}(\rho_\diag) =
\int[\dif\rho_{\mathrm{off}}]\mathbf{P}(\rho)=
\frac{\Gamma(mn)}{\Gamma(n)^m}\delta\Pa{1-\sum^m_{j=1}\rho_{jj}}\prod^m_{j=1}\rho_{jj}^{n-1}.
\end{eqnarray}
\end{prop}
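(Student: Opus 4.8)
The plan is to sidestep a direct integration of the off-diagonal variables against the determinantal density $\mathbf{P}(\rho)\propto\delta(1-\Tr{\rho})\det^{n-m}(\rho)$, since $\det^{n-m}(\rho)$ couples the diagonal and off-diagonal entries and is unpleasant to marginalize. Instead I would exploit the realization $\rho=W/\Tr{W}$ of a random mixed state as a normalized Wishart matrix and feed in the diagonal Wishart law already established,
\begin{eqnarray*}
\mathbf{q}(W^\diag)=\frac1{\Gamma(n)^m}\Pa{\prod^m_{j=1}w_{jj}^{n-1}}\exp\Pa{-\sum^m_{j=1}w_{jj}}.
\end{eqnarray*}
The decisive observation is that $\Tr{W}=\sum_k w_{kk}$ depends only on the diagonal of $W$, so each $\rho_{jj}=w_{jj}/\sum_k w_{kk}$ is a function of $W^\diag$ alone. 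Consequently the marginal of $\rho_\diag$ for the fixed-trace ensemble is exactly the pushforward of $\mathbf{q}(W^\diag)$ under the radial projection $W^\diag\mapsto W^\diag/\Tr{W^\diag}$ onto the simplex, and the off-diagonal entries of $W$ --- already integrated away in forming $\mathbf{q}$ --- play no further role.

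Next I would implement this pushforward by the scaling $w_{jj}=t\,\rho_{jj}$, where $t=\sum_k w_{kk}>0$ and $\sum_j\rho_{jj}=1$; this is a polar-type decomposition of the positive orthant $\real_+^m$ into a radial coordinate $t$ and an angular coordinate $\rho_\diag$ on the probability simplex. Computing its Jacobian produces a factor $t^{m-1}$, so that $\prod_j\dif w_{jj}=t^{m-1}\,\dif t\,\delta\Pa{1-\sum_j\rho_{jj}}\prod_j\dif\rho_{jj}$. Substituting $w_{jj}=t\rho_{jj}$ into $\mathbf{q}$ and collecting the powers of $t$ coming from $\prod_j(t\rho_{jj})^{n-1}$ and from the Jacobian gives $t^{m(n-1)}\cdot t^{m-1}=t^{mn-1}$, while the exponential becomes $e^{-t}$; thus the radial and angular dependences separate.

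Finally I would integrate the radial variable out using $\int_0^\infty t^{mn-1}e^{-t}\,\dif t=\Gamma(mn)$, which generates the prefactor $\Gamma(mn)/\Gamma(n)^m$ and leaves the angular density $\prod_j\rho_{jj}^{n-1}$ carried by the constraint $\delta\Pa{1-\sum_j\rho_{jj}}$, matching the asserted formula. As a consistency check one can verify normalization through the Dirichlet integral $\int_{\real_+^m}\delta\Pa{1-\sum_j\rho_{jj}}\prod_j\rho_{jj}^{n-1}\prod_j\dif\rho_{jj}=\Gamma(n)^m/\Gamma(mn)$. The only steps needing genuine care are the Jacobian of the scaling map and the bookkeeping of the simplex constraint through the delta function; once the trace is recognized as diagonal-only, the rest reduces to a clean separation of variables and a single Gamma integral.
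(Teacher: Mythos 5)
Your proposal is correct and follows essentially the same route as the paper's own verification: reduce to the diagonal Wishart law (a product of independent Gamma densities), apply the scaling $w_{jj}=t\rho_{jj}$ with Jacobian $t^{m-1}$, and integrate out $t$ via $\int_0^\infty t^{mn-1}e^{-t}\,\dif t=\Gamma(mn)$. The only cosmetic difference is that you invoke the previously established $\mathbf{q}(W^\diag)$ directly, whereas the paper re-derives the Gamma marginals of the $W_{ii}$ from the Gaussian entries of $Z$ before performing the identical change of variables.
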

We check it directly. A random reduced quantum state $\rho$,
obtained by partial tracing a Haar-distributed bipartite state
$\ket{Z}\in \complex^m\ot\complex^n(m\leqslant n)$, can be expressed
via a Wishart matrix as follows:
$$
\rho=\frac{W}{\Tr{W}}=\frac1tW,
$$
where $W=ZZ^\dagger$ for $Z=[z_{ij}]$ is an $m\times n$ matrix with
independent Gaussian complex entries, and
$t:=\Tr{W}=\Tr{ZZ^\dagger}$. Then $t\rho=W=ZZ^\dagger$, i.e.
$$
W_{ii}=t\rho_{ii} = \sum^n_{j=1}\abs{z_{ij}}^2 =
\sum^n_{j=1}\Br{\Pa{\re(z_{ij})}^2+\Pa{\im(z_{ij})}^2},
$$
where $\re(z_{ij}),\im(z_{ij})\sim N(0,\frac12)$, leading to the
following: for all $i=1,\ldots,m$,
$$
p\Pa{W_{ii}} = \frac{W_{ii}^{n-1}}{2^n\Gamma(n)}e^{-\frac12W_{ii}},
$$
leading to the following:
$$
p(t) = \frac{t^{mn-1}}{2^{mn}\Gamma(mn)}e^{-\frac12t}.
$$
Let us perform the following change of variables:
$\Pa{W_{11},\ldots,W_{mm}}\mapsto (
\rho_{11},\ldots,\rho_{m-1,m-1},t)$. The Jacobian of the
transformation \cite{mathai1997} is
$$
\abs{\frac{\partial \Pa{W_{11},\ldots,W_{mm}}}{\partial
(\rho_{11},\ldots,\rho_{m-1,m-1},t)}} = t^{m-1}.
$$
That is
$$
\dif W_{11}\cdots \dif W_{mm}=t^{m-1}\dif t\dif \rho_{11}\cdots \dif
\rho_{m-1,m-1}.
$$
Furthermore,
\begin{eqnarray*}
&&p\Pa{W_{11}}\cdots p\Pa{W_{mm}}\dif W_{11}\cdots\dif
W_{mm}\\
&&=p\Pa{t\rho_{11}}\cdots p\Pa{t\rho_{mm}}t^{m-1}\dif t\dif
\rho_{11}\cdots \dif \rho_{m-1,m-1}\\
&&=\frac{\Pa{t\rho_{11}}^{n-1}}{2^n\Gamma(n)}e^{-\frac12t\rho_{11}}
\times\cdots\times
\frac{\Pa{t\rho_{mm}}^{n-1}}{2^n\Gamma(n)}e^{-\frac12t\rho_{mm}}t^{m-1}\dif
t\dif \rho_{11}\cdots
\dif \rho_{m-1,m-1}\\
&&=\frac{t^{mn-m}}{2^{mn}\Gamma(n)^m}e^{-\frac12t}\prod^m_{j=1}\rho^{n-1}_{jj}\times
t^{m-1}\dif t\dif \rho_{11}\cdots \dif \rho_{m-1,m-1}.
\end{eqnarray*}
Finally we get
\begin{eqnarray*}
&&p\Pa{W_{11}}\cdots p\Pa{W_{mm}}\dif W_{11}\cdots\dif
W_{mm}=p\Pa{W_{11},\ldots,W_{mm}}\dif W_{11}\cdots\dif
W_{mm}\\
&&=p\Pa{t\rho_{11},\ldots,t\rho_{mm}}t^{m-1}\dif t\dif
\rho_{11}\cdots \dif \rho_{m-1,m-1}\\
&&=\frac{t^{mn-1}}{2^{mn}\Gamma(mn)}e^{-\frac12t}\dif
t\times\frac{\Gamma(mn)}{\Gamma(n)^m}\prod^m_{j=1}\rho^{n-1}_{jj}\dif
\rho_{11}\cdots \dif \rho_{m-1,m-1}.
\end{eqnarray*}
Taking the integration with respect to $t$ gives rise to the
marginal distribution---the distribution of the diagonal elements
which is the symmetric Dirichlet distribution \cite{mathai1997}:
\begin{eqnarray*}
\mathbf{p}(\rho_\diag) := \mathbf{p}(\rho_{11},\ldots,\rho_{mm}) =
\frac{\Gamma(mn)}{\Gamma(n)^m}\delta\Pa{1-\sum^m_{j=1}\rho_{jj}}\prod^m_{j=1}\rho^{n-1}_{jj}.
\end{eqnarray*}
The following result, although beyond our goal in the present paper,
we still record it here for independent interests, dealt with the
exact analytical relationship between the joint distributions of
diagonal entries and eigenvalues of the same invariant ensemble.

\begin{prop}[Derivative principle, \cite{zapata}]\label{prop:der-princ}
Let $Z$ be a random matrix drawn from a unitarily invariant random
matrix ensemble, $\varrho_Z$ the joint eigenvalue distribution for
$Z$ and $\mathbf{p}_Z$ the joint distribution of the diagonal
elements of $Z$. Then
\begin{eqnarray}\label{eq:derivative-principle}
\varrho_Z(\lambda) &=&
\frac1{\prod^N_{k=1}k!}\Delta(\lambda)\Delta(-\partial_\lambda)\mathbf{p}_Z(\lambda),
\end{eqnarray}
where $\Delta(\lambda) = \prod_{i<j}(\lambda_j-\lambda_i)$ is the
Vandermonde determinant and $\Delta(-\partial_\lambda)$ the
differential operator
$\prod_{i<j}\Pa{\frac{\partial}{\partial_{\lambda_i}} -
\frac{\partial}{\partial_{\lambda_j}}}$.
\end{prop}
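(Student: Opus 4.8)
The plan is to pass to the Fourier (characteristic-function) side, where the unitary invariance of the ensemble can be exploited through the Harish-Chandra--Itzykson--Zuber (HCIZ) integral. Writing the matrix as $Z=U\Lambda U^\dagger$ with $\Lambda=\diag(\lambda_1,\ldots,\lambda_N)$ and $U$ Haar distributed on $U(N)$, the diagonal vector $(Z_{11},\ldots,Z_{NN})$ has characteristic function
\begin{eqnarray*}
F(t) := \expect\Br{e^{\mathrm{i}\Tr{TZ}}} = \int\dif\lambda\,\varrho_Z(\lambda)\int_{U(N)} e^{\mathrm{i}\Tr{TU\Lambda U^\dagger}}\dif U,\qquad T=\diag(t_1,\ldots,t_N).
\end{eqnarray*}
The inner Haar average is exactly the HCIZ integral, which yields
\begin{eqnarray*}
F(t) = \Pa{\prod_{k=1}^{N-1}k!}\int\dif\lambda\,\varrho_Z(\lambda)\frac{\det\Br{e^{\mathrm{i}t_a\lambda_b}}_{a,b}}{\Delta(\mathrm{i}t)\Delta(\lambda)}.
\end{eqnarray*}
First I would record the two elementary facts that drive the identity: that $\mathbf{p}_Z$ is recovered from $F$ by inverse Fourier transform, and that $\Delta(-\partial)$ is diagonal on plane waves, $\Delta(-\partial_d)e^{-\mathrm{i}\langle t,d\rangle}=(\mathrm{i})^{\binom{N}{2}}\Delta(t)\,e^{-\mathrm{i}\langle t,d\rangle}$.

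The heart of the computation is that applying $\Delta(-\partial_\lambda)$ to $\mathbf{p}_Z$ multiplies the Fourier integrand by $\Delta(t)$, which is precisely the Vandermonde $\Delta(\mathrm{i}t)=(\mathrm{i})^{\binom{N}{2}}\Delta(t)$ sitting in the HCIZ denominator; the two cancel, powers of $\mathrm{i}$ included. Carrying this out gives
\begin{eqnarray*}
\Delta(-\partial_d)\mathbf{p}_Z(d) = \frac{\prod_{k=1}^{N-1}k!}{(2\pi)^N}\int\dif\lambda\,\frac{\varrho_Z(\lambda)}{\Delta(\lambda)}\int\dif t\, e^{-\mathrm{i}\langle t,d\rangle}\det\Br{e^{\mathrm{i}t_a\lambda_b}}_{a,b}.
\end{eqnarray*}
Then I would expand the determinant as $\sum_{\sigma}\sign(\sigma)\prod_a e^{\mathrm{i}t_a\lambda_{\sigma(a)}}$ and perform the $t$-integration factor by factor, each factor producing a one-dimensional delta, so that the inner integral collapses to $(2\pi)^N\sum_{\sigma\in S_N}\sign(\sigma)\prod_a\delta\Pa{\lambda_{\sigma(a)}-d_a}$.

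The final step is to integrate against $\varrho_Z(\lambda)/\Delta(\lambda)$, which sets $\lambda$ equal to the corresponding permutation of $d$. Here the symmetry of the eigenvalue density, $\varrho_Z(\lambda)=\varrho_Z(d)$, together with the antisymmetry of the Vandermonde, $\Delta(\lambda)=\sign(\sigma)\Delta(d)$, makes every one of the $N!$ permutations contribute the identical term $\varrho_Z(d)/\Delta(d)$ (the two factors of $\sign(\sigma)$ cancel). Hence
\begin{eqnarray*}
\Delta(d)\Delta(-\partial_d)\mathbf{p}_Z(d) = N!\Pa{\prod_{k=1}^{N-1}k!}\varrho_Z(d) = \Pa{\prod_{k=1}^{N}k!}\varrho_Z(d),
\end{eqnarray*}
which is the claimed formula after renaming $d$ as $\lambda$.

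I expect the genuine obstacle to be analytic rather than algebraic: the HCIZ integrand decays only polynomially, so the interchange of the $t$- and $\lambda$-integrations and the delta-function manipulations are formal as written. I would make them rigorous by inserting a Gaussian regulator $e^{-\epsilon\norm{t}^2}$ and letting $\epsilon\to0^+$ (or by phrasing everything in the space of tempered distributions), and by checking that the zeros of $\Delta(\lambda)$ in the denominator are cancelled by the antisymmetry of $\varrho_Z$, so that no spurious singularity survives at coincident eigenvalues. A secondary, milder point is that for density matrices the measures live on the fixed-trace hyperplane $\set{\sum_j d_j=\sum_j\lambda_j}$; since both $\Delta$ and $\Delta(-\partial)$ are invariant under translation along the all-ones vector, the constraint commutes with the whole argument and leaves the constant $\prod_{k=1}^{N}k!$ untouched.
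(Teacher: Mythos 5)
The paper offers no proof of this proposition: it is quoted from \cite{zapata} and explicitly flagged as lying outside the paper's goals (``although beyond our goal in the present paper, we still record it here''), so there is nothing internal to compare your argument against. Judged on its own terms, your derivation is correct and is essentially the standard one (the route taken in the cited source as well): the characteristic function of the diagonal vector is a Harish-Chandra--Itzykson--Zuber integral; the operator $\Delta(-\partial_d)$ acts on the plane wave $e^{-\mathrm{i}\langle t,d\rangle}$ as multiplication by $\mathrm{i}^{\binom{N}{2}}\Delta(t)=\Delta(\mathrm{i}t)$, which cancels the HCIZ denominator exactly; the $t$-integration of the expanded determinant produces permuted delta functions; and the two signs --- one from $\sign(\sigma)$ in the determinant expansion, one from the antisymmetry $\Delta(\lambda)=\sign(\sigma)\Delta(d)$ under the permutation forced by the deltas --- cancel, so that all $N!$ terms contribute equally. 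Combined with the HCIZ constant $\prod_{k=1}^{N-1}k!$ this yields the stated normalization $\prod_{k=1}^{N}k!$, consistent with $\varrho_Z$ being the symmetrized (unordered) eigenvalue density, which is the convention the proposition implicitly uses. Your closing caveats are also the right ones: the interchange of the $t$- and $\lambda$-integrations and the delta manipulations require a Gaussian regulator or a tempered-distribution formulation to be rigorous, and the fixed-trace constraint relevant to the density-matrix ensembles in this paper is harmless because both $\Delta$ and $\Delta(-\partial)$ are invariant under translation along the all-ones direction.
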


\subsection{Average entropy of the diagonal entries of random density matrices}

In what follows, we calculate the average entropy of the diagonal
part of random density matrices under the distribution of random
density matrices subject to \eqref{eq:wishar-of-fixed-trace}.
Specifically, we will calculate the following integral:
\begin{eqnarray}
\overline{S}^D_{m,n} &=& \int
S(\rho_\diag)\mathbf{P}(\rho)[\dif\rho] = \int[\dif\rho_\diag]
S(\rho_\diag)\int
[\dif\rho_{\mathrm{off}}]\mathbf{P}(\rho)\notag\\
&=& \int[\dif\rho_\diag]
S(\rho_\diag)\mathbf{p}(\rho_\diag),\label{eq:Dirichlet-int}
\end{eqnarray}
where
\begin{eqnarray*}
\mathbf{P}(\rho) \propto \delta(1-\Tr{\rho})\int[\dif Z]\delta(\rho
- ZZ^\dagger)\varphi(Z)
\end{eqnarray*}
for $\varphi(Z) = \frac1{\pi^{mn}}\exp\Pa{-\Tr{ZZ^\dagger}}$. We
have the following result:
\begin{prop}
The average diagonal entropy of random mixed quantum states, induced
by Haar-distributed bipartite pure states on $\complex^m\ot
\complex^n$, is given by the following:
\begin{eqnarray}
\overline{S}^D_{m,n} = H_{mn} - H_n,
\end{eqnarray}
where $H_k := \sum^k_{j=1}\frac1j$ is the $k$-th harmonic number for
positive integer number $k$.
\end{prop}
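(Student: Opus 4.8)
The plan is to reduce the $m$-dimensional integral to a one-variable problem by symmetry, and then to extract the logarithm by differentiating a closed-form power-moment with respect to an auxiliary exponent.

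First I would observe that $\rho_\diag$ is already a diagonal density matrix with entries $\rho_{11},\dots,\rho_{mm}$, so its von Neumann entropy collapses to the Shannon entropy $S(\rho_\diag) = -\sum_{k=1}^m \rho_{kk}\ln\rho_{kk}$. Substituting this into \eqref{eq:Dirichlet-int} and using that $\mathbf{p}(\rho_\diag)$ is invariant under permutations of the coordinates $\rho_{jj}$, every one of the $m$ summands contributes equally, so
$$\overline{S}^D_{m,n} = -m\,\frac{\Gamma(mn)}{\Gamma(n)^m}\int[\dif\rho_\diag]\,\rho_{11}\ln\rho_{11}\,\delta\Pa{1-\sum_{j=1}^m\rho_{jj}}\prod_{j=1}^m\rho_{jj}^{n-1}.$$
Thus everything reduces to computing the single expectation $\mathbb{E}[\rho_{11}\ln\rho_{11}]$ against the symmetric Dirichlet density.

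The core device is to introduce an auxiliary parameter $s$ and treat the analytic family of power-moments. Recalling the Dirichlet normalization integral (a direct consequence of the Beta function),
$$\int\delta\Pa{1-\sum_j x_j}\prod_j x_j^{a_j-1}\,[\dif x] = \frac{\prod_j\Gamma(a_j)}{\Gamma\Pa{\sum_j a_j}},$$
I would apply it with $a_1 = n+1+s$ and $a_j = n$ for $j\geq 2$ to obtain the closed form
$$g(s) := \mathbb{E}\big[\rho_{11}^{1+s}\big] = \frac{\Gamma(mn)\,\Gamma(n+1+s)}{\Gamma(n)\,\Gamma(mn+1+s)}.$$
Since $\rho_{11}\ln\rho_{11} = \partial_s\rho_{11}^{1+s}\big|_{s=0}$, differentiating under the integral sign gives $\mathbb{E}[\rho_{11}\ln\rho_{11}] = g'(0)$. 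Logarithmic differentiation yields $g'(s)/g(s) = \psi(n+1+s) - \psi(mn+1+s)$, where $\psi = (\ln\Gamma)'$ is the digamma function, while the functional equation of $\Gamma$ gives $g(0) = n/(mn) = 1/m$. Hence $g'(0) = \frac1m\big[\psi(n+1)-\psi(mn+1)\big]$, and invoking the integer identity $\psi(k+1) = H_k - \gamma$ (so the Euler constant $\gamma$ cancels) produces $g'(0) = \frac1m(H_n - H_{mn})$. Multiplying by $-m$ gives $\overline{S}^D_{m,n} = H_{mn} - H_n$, exactly as stated.

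The only genuine subtlety, rather than a real obstacle, is justifying the interchange of the $s$-derivative with the integral: one needs $g(s)$ to be analytic in a neighborhood of $s=0$ (guaranteed since $n+1+s>0$ there) and the integrand $\rho_{11}\ln\rho_{11}$ to be integrable against the Dirichlet weight, which holds because $x\ln x$ is bounded on $[0,1]$. Equivalently, one may note that the one-coordinate marginal of this symmetric Dirichlet law is a Beta$(n,(m-1)n)$ distribution, which makes the power-moment $g(s)$ and its analyticity transparent. Beyond these routine checks, the argument is a single application of the Beta integral together with the digamma–harmonic-number relation.
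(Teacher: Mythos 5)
Your proposal is correct and follows essentially the same route as the paper: reduce to $-m\,\mathbb{E}[\rho_{11}\ln\rho_{11}]$ by permutation symmetry of the Dirichlet density, produce the logarithm by differentiating a Beta-type integral in an exponent parameter, and finish with $\psi(k+1)=H_k-\gamma_{\text{Euler}}$. The only (cosmetic) difference is that the paper first extracts the one-coordinate marginal $F(1-\rho_{11})$ via a Laplace transform and then differentiates $B(\alpha,\beta)$ in $\alpha$ at $(n+1,n(m-1))$, whereas you differentiate the full Dirichlet normalization through the moment $g(s)=\mathbb{E}[\rho_{11}^{1+s}]$; both yield $\psi(mn+1)-\psi(n+1)=H_{mn}-H_n$.
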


\begin{proof}
According the distribution of diagonal part of random mixed quantum
states, we have
\begin{eqnarray*}
\overline{S}^D_{m,n} &=& \int S(\rho_\diag)\mathbf{p}(\rho_\diag)[\dif\rho_\diag]\\
&=&\frac{\Gamma(mn)}{\Gamma(n)^m}\int \Pa{-\sum^m_{j=1}\rho_{jj}\ln
\rho_{jj}}
\delta\Pa{1-\sum^m_{j=1}\rho_{jj}}\prod^m_{j=1}\rho^{n-1}_{jj}\dif \rho_{jj}\\
&=& -m\frac{\Gamma(mn)}{\Gamma(n)^m}\int (\rho_{11}\ln
\rho_{11})\delta\Pa{1-\sum^m_{j=1}\rho_{jj}}\prod^m_{j=1}\rho^{n-1}_{jj}\dif
\rho_{jj}.
\end{eqnarray*}
Then it can be rewritten as
\begin{eqnarray}
\overline{S}^D_{m,n} &=&
-m\frac{\Gamma(mn)}{\Gamma(n)^m}\int^1_0\dif \rho_{11}
\rho^n_{11}\ln \rho_{11}\int^\infty_0\cdots\int^\infty_0
\delta\Pa{(1-\rho_{11})-\sum^m_{j=2}\rho_{jj}}\prod^m_{j=2}\rho^{n-1}_{jj}\dif
\rho_{jj}.
\end{eqnarray}
Now denote
$$
F(t) = \int^\infty_0\cdots\int^\infty_0
\delta\Pa{t-\sum^m_{j=2}\rho_{jj}}\prod^m_{j=2}\rho^{n-1}_{jj}\dif
\rho_{jj}.
$$
Then via $y_j=s\rho_{j+1,j+1}$, where $j=1,\ldots,m-1$, performing
Laplace transform $(t\to s)$ to $F(t)$ \cite{Williams1973}, we get
\begin{eqnarray*}
\widetilde F(s) &=& \int^\infty_0\cdots\int^\infty_0
\exp\Pa{-s\sum^m_{j=2}\rho_{jj}}\prod^m_{j=2}\rho^{n-1}_{jj}\dif
\rho_{jj} \\
&=& s^{-n(m-1)}
\int^\infty_0\cdots\int^\infty_0\prod^{m-1}_{j=1}e^{-y_j}y^{n-1}_j\dif
y_j\\
&=&s^{-n(m-1)} \prod^{m-1}_{j=1}\int^\infty_0y^{n-1}_je^{-y_j}\dif
y_j =s^{-n(m-1)} \prod^{m-1}_{j=1}\Gamma(n),
\end{eqnarray*}
that is, $\widetilde F(s) = \Gamma(n)^{m-1}s^{-n(m-1)}$. This
implies
$$
F(t) = \frac{\Gamma(n)^{m-1}}{\Gamma(n(m-1))}t^{n(m-1)-1}.
$$
Therefore
\begin{eqnarray*}
\overline{S}^D_{m,n} &=& -m\frac{\Gamma(mn)}{\Gamma(n)^m}\int^1_0\dif \rho_{11} \rho^n_{11}\ln \rho_{11}F(1-\rho_{11})\\
&=&-m\frac{\Gamma(mn)}{\Gamma(n)^m}\frac{\Gamma(n)^{m-1}}{\Gamma(n(m-1))}\int^1_0
x^n\ln x
(1-x)^{n(m-1)-1}\dif x\\
&=&-m\frac{\Gamma(mn)}{\Gamma(n)^m}\frac{\Gamma(n)^{m-1}}{\Gamma(n(m-1))}\left.\frac{\partial
B}{\partial \alpha}\right|_{(\alpha, \beta)=(n+1,n(m-1))},
\end{eqnarray*}
where $B(\alpha,\beta)=\int^1_0x^{\alpha-1}(1-x)^{\beta-1}\dif
x=\frac{\Gamma(\alpha)\Gamma(\beta)}{\Gamma(\alpha+\beta)}$ and
$$
\frac{\partial B}{\partial \alpha}
=\frac{\Gamma(\alpha)\Gamma(\beta)}{\Gamma(\alpha+\beta)}(\psi(\alpha)-\psi(\alpha+\beta)),
$$
where $\psi(\alpha):=\frac{\dif}{\dif \alpha}\ln\Gamma(\alpha)$.
Letting $(\alpha,\beta)=(n+1,n(m-1))$ give rises to
\begin{eqnarray}
\overline{S}^D_{m,n} =
-m\frac{\Gamma(mn)}{\Gamma(n)^m}\frac{\Gamma(n)^{m-1}}{\Gamma(n(m-1))}
\frac{\Gamma(n+1)\Gamma(n(m-1))}{\Gamma(mn+1)}(\psi(n+1)-\psi(mn+1)),
\end{eqnarray}
where $\psi(n+1)=H_n-\gamma_{\text{Euler}}$ for Euler's constant
$\gamma_{\text{Euler}}\approx0.57722$ \cite{Young1991}. That is,
\begin{eqnarray}
\overline{S}^D_{m,n} = \psi(mn+1)-\psi(n+1) = H_{mn} - H_n.
\end{eqnarray}
We are done.
\end{proof}
Note that similar integrals like the one in \eqref{eq:Dirichlet-int}
are considered recently for the motivation from machine-learning,
see \cite{Montufar}. This result is very interesting, compared with
Page's formula \cite{FK1994}, stating the average entropy of a
subsystem given by $\overline{S}_{m,n}=H_{mn}-H_n -
\frac{m-1}{2n}(m\leqslant n)$. With this result, we can give the
average relative entropy of coherence for random mixed quantum
states, obtained recently in the paper \cite{Zhang2015}, in the
following section.

\section{Main results}\label{sect:main-results}

In this section, we will present our main result about quantum
coherence, stating the average relative entropy of coherence for
random mixed quantum states can be given by the following compact
formula (see also in \cite{Zhang2015}). Note that what we emphasize
here is the method used for deriving this elegant formula.

\begin{thrm}[Average coherence]\label{th:ave-coh}
For random mixed states of dimension $m$ sampled from induced
measures obtained via partial tracing of Haar distributed bipartite
pure states of dimension $mn$ where $m\leqslant n$, the average
relative entropy of coherence is given by the following compact form
\begin{eqnarray}\label{eq:h-1th}
\overline{\sC}_{m,n} = \frac{m-1}{2n}.
\end{eqnarray}
\end{thrm}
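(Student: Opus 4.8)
The plan is to reduce the statement to the difference of two separately computable averages by exploiting linearity of the expectation together with the defining formula $\sC(\rho) = S(\rho_\diag) - S(\rho)$. Since the coherence splits additively into a diagonal-entropy term and a von Neumann entropy term, integrating against the induced measure $\mathbf{P}(\rho)$ gives
\begin{eqnarray*}
\overline{\sC}_{m,n} &=& \int \sC(\rho)\mathbf{P}(\rho)[\dif\rho] = \int S(\rho_\diag)\mathbf{P}(\rho)[\dif\rho] - \int S(\rho)\mathbf{P}(\rho)[\dif\rho]\\
&=& \overline{S}^D_{m,n} - \overline{S}_{m,n},
\end{eqnarray*}
so the whole task becomes evaluating the two pieces.

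First I would handle the diagonal-entropy term. Because $S(\rho_\diag)$ depends only on the diagonal entries $\rho_{11},\ldots,\rho_{mm}$, integrating out the off-diagonal part replaces $\mathbf{P}(\rho)$ by its marginal $\mathbf{p}(\rho_\diag)$, exactly as in \eqref{eq:Dirichlet-int}. The Proposition already established for this marginal then gives $\overline{S}^D_{m,n} = H_{mn} - H_n$ with no further work. For the second term I would invoke Page's formula \cite{FK1994}, recalled in the text immediately above, which provides the average von Neumann entropy of a subsystem as $\overline{S}_{m,n} = H_{mn} - H_n - \frac{m-1}{2n}$ for $m \leqslant n$.

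Subtracting, the harmonic-number contributions $H_{mn} - H_n$ cancel identically and one is left with
\begin{eqnarray*}
\overline{\sC}_{m,n} &=& (H_{mn} - H_n) - \Pa{H_{mn} - H_n - \frac{m-1}{2n}} = \frac{m-1}{2n},
\end{eqnarray*}
which is the claim.

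The computation itself is a one-line cancellation, so the genuine content sits entirely in the two ingredients rather than in their combination. The only points requiring care are bookkeeping ones: confirming that both entropies are taken with the same (natural) logarithm and the same trace normalization so that the $H_{mn} - H_n$ blocks match exactly, and noting that the resulting value is manifestly nonnegative, consistent with $S(\rho_\diag) \geqslant S(\rho)$. The main obstacle in the overall development is therefore not this theorem but the prior derivation of the diagonal-entropy Proposition; once that is in hand, the average coherence follows immediately by pairing it with the classical Page result, which is precisely the ``fast track'' the paper advertises over the subentropy-first route of \cite{Zhang2015}.
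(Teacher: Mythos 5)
Your proposal is correct and follows essentially the same route as the paper's own proof: split $\overline{\sC}_{m,n}$ into $\overline{S}^D_{m,n} - \overline{S}_{m,n}$ by linearity, evaluate the first term via the marginal distribution of the diagonal entries (giving $H_{mn}-H_n$), invoke Page's formula for the second, and cancel. There is nothing missing; the substantive content indeed lives in the diagonal-entropy Proposition and in Page's theorem, exactly as you observe.
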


\begin{proof}
Now the distribution of random mixed quantum states is given by
\begin{eqnarray}
\mathbf{P}(\rho)\propto \delta(1-\Tr{\rho})\det^{n-m}(\rho).
\end{eqnarray}
Under this distribution, we calculate the average relative entropy
of coherence as follows:
\begin{eqnarray}
&&\overline{\sC}_{m,n} = \int[\dif
\rho]\mathbf{P}(\rho)\sC_{m,n}(\rho) = \int[\dif
\rho]\mathbf{P}(\rho)S(\rho_\diag) - \int[\dif
\rho]\mathbf{P}(\rho)S(\rho)\\
&& = \overline{S}^D_{m,n} - \overline{S}_{m,n} = (H_{mn} - H_n) -
\Pa{H_{mn} - H_n - \frac{m-1}{2n}} = \frac{m-1}{2n}.
\end{eqnarray}
Note here that we used the fact that
\begin{eqnarray}
\overline{S}_{m,n} = \int[\dif \rho]\mathbf{P}(\rho)S(\rho) = H_{mn}
- H_n - \frac{m-1}{2n}
\end{eqnarray}
which is called Page's average entropy formula, conjectured in
\cite{Page1993}, and proven in \cite{FK1994,SR1995,Sen1996}.
\end{proof}

\begin{remark}
For $m=n$, we see that the average coherence is given by
$\frac{m-1}{2m}$, which is approaching to $\frac12$ when
$m\to\infty$. The asymptotic value $\frac12$ of the average
coherence is confirmed by Pucha{\l}a \emph{et. al} using tools from
free probability theory \cite{Puchala2015}.
\end{remark}

We have already known that the distribution of random mixed quantum
states is given by
\begin{eqnarray*}
\mathbf{P}(\rho)\propto \delta(1-\Tr{\rho})\det^{n-m}(\rho).
\end{eqnarray*}
By the spectral decomposition of $\rho$, we have $\rho=U\Lambda
U^\dagger$ with $\Lambda=\diag(\lambda_1,\ldots,\lambda_m)$, where
$\lambda_i\neq\lambda_j$ for distinct indices $i$ and $j$. Since
this distribution $\mathbf{P}(\rho)$ is unitary-invariant, noting
that
$$
[\dif\rho]\propto \prod_{1\leqslant i<j\leqslant
m}(\lambda_j-\lambda_i)^2[\dif\Lambda]\dif\mu_{\mathrm{Haar}}(U),
$$
it follows that the joint distribution of the eigenvalues of random
mixed quantum states is given by \cite{Karol2001}
\begin{eqnarray*}
P_{m,n}(\Lambda)\propto
\delta\Pa{1-\sum^m_{j=1}\lambda_j}\prod_{1\leqslant i<j\leqslant
m}(\lambda_j-\lambda_i)^2\prod^m_{j=1}\lambda^{n-m}_j.
\end{eqnarray*}
In what follows, we reconsider the calculation of the average
coherence of random mixed quantum states:
\begin{eqnarray*}
\overline{\sC}_{m,n} = \int[\dif\rho]\mathbf{P}(\rho)\sC_{m,n}(\rho)
=
\int[\dif\Lambda]P_{m,n}(\Lambda)\Br{\int\dif\mu_{\mathrm{Haar}}(U)\sC_{m,n}(U\Lambda
U^\dagger)}.
\end{eqnarray*}
Clearly $\sC_{m,n}(U\Lambda U^\dagger) = S((U\Lambda
U^\dagger)_\diag) - S(U\Lambda U^\dagger) = S((U\Lambda
U^\dagger)_\diag) - S(\Lambda)$, which implies that
\begin{eqnarray*}
\overline{\sC}_{m,n} =
\int[\dif\Lambda]P_{m,n}(\Lambda)\overline{\sC}^{\mathrm{iso}}_{m,n}(\Lambda),
\end{eqnarray*}
where
\begin{eqnarray*}
\overline{\sC}^{\mathrm{iso}}_{m,n}(\Lambda) = \int S((U\Lambda
U^\dagger)_\diag)\dif\mu_{\mathrm{Haar}}(U) - S(\Lambda).
\end{eqnarray*}
Earlier in the study of upper bounds and lower bounds of classical
accessible information, one obtains that the average diagonal
entropy of isospectral quantum state with a fixed spectrum
$\Lambda=\set{\lambda_1,\ldots,\lambda_m}$ is given by
\cite{Jones1991,Jozsa1994}:
\begin{eqnarray}
\int S((U\Lambda U^\dagger)_\diag)\dif\mu_{\mathrm{Haar}}(U) = H_m -
1 + Q(\Lambda),
\end{eqnarray}
where
\begin{eqnarray}
Q(\Lambda) := -\sum^m_{i=1}
\frac{\lambda^m_i\ln\lambda_i}{\prod_{j\in\widehat
i}(\lambda_i-\lambda_j)}\quad (\widehat
i:=\set{1,\ldots,m}\backslash\set{i}),
\end{eqnarray}
which is called \emph{subentropy} \cite{Datta2014,Jozsa1994}. From
this, we see that the average relative entropy of coherence of
isospectral quantum states of fixed spectrum
$\Lambda=\set{\lambda_1,\ldots,\lambda_m}$ is obtained easily
\begin{eqnarray}
\overline{\sC}^{\mathrm{iso}}_{m,n}(\Lambda) = \int S((U\Lambda
U^\dagger)_\diag)\dif\mu_{\mathrm{Haar}}(U) - S(\Lambda) = H_m - 1 +
Q(\Lambda) - S(\Lambda).
\end{eqnarray}
Now
\begin{eqnarray}
\overline{\sC}_{m,n} = \int[\dif\Lambda]P_{m,n}(\Lambda)
\overline{\sC}^{\mathrm{iso}}_{m,n}(\Lambda) = H_m - 1 +
\int[\dif\Lambda] P_{m,n}(\Lambda) Q(\Lambda) -
\int[\dif\Lambda]P_{m,n}(\Lambda) S(\Lambda).
\end{eqnarray}
By using Page's formula for the average von Neumann entropy:
\begin{eqnarray}
\int[\dif\Lambda]P_{m,n}(\Lambda) S(\Lambda) =\overline{S}_{m,n}=
H_{mn}- H_n -\frac{m-1}{2n}.
\end{eqnarray}
If we denote
\begin{eqnarray}
\overline{Q}_{m,n}:=\int[\dif\rho] \mathbf{P}(\rho) Q(\rho),
\end{eqnarray}
then
\begin{eqnarray}
\overline{Q}_{m,n}&=&\int[\dif\Lambda] P_{m,n}(\Lambda) \int
\dif\mu_{\mathrm{Haar}}(U)Q(U\Lambda U^\dagger)\notag\\
&=& \int[\dif\Lambda] P_{m,n}(\Lambda) Q(\Lambda),
\end{eqnarray}
then using the result of the average relative entropy of coherence
for random mixed quantum states, we get the following result which
may be of independent interest later in the investigation of quantum
information theory.
\begin{thrm}[Average subentropy]\label{th:ave-suben}
For random mixed states of dimension $m$ sampled from induced
measures obtained via partial tracing of Haar distributed bipartite
pure states of dimension $mn$ where $m\leqslant n$, the average
relative entropy of coherence is given by
\begin{eqnarray}
\overline{Q}_{m,n} = 1+H_{mn}-H_m-H_n.
\end{eqnarray}
\end{thrm}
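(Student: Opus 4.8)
The plan is to avoid any new integration and instead invert the linear identity already assembled immediately before the statement. Recall that by averaging the isospectral relation $\overline{\sC}^{\mathrm{iso}}_{m,n}(\Lambda) = H_m - 1 + Q(\Lambda) - S(\Lambda)$ over the eigenvalue density $P_{m,n}(\Lambda)$, and using that $H_m - 1$ is a constant that survives normalization of $P_{m,n}$, the paper has reduced the average coherence to
\begin{eqnarray*}
\overline{\sC}_{m,n} = H_m - 1 + \overline{Q}_{m,n} - \overline{S}_{m,n}.
\end{eqnarray*}
The only task left is to solve this relation for $\overline{Q}_{m,n}$, so I would rearrange it to
\begin{eqnarray*}
\overline{Q}_{m,n} = \overline{\sC}_{m,n} + 1 - H_m + \overline{S}_{m,n}.
\end{eqnarray*}

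Next I would substitute the two ingredients established earlier: the compact average coherence $\overline{\sC}_{m,n} = \frac{m-1}{2n}$ from Theorem~\ref{th:ave-coh}, and Page's average von Neumann entropy $\overline{S}_{m,n} = H_{mn} - H_n - \frac{m-1}{2n}$. The two contributions of $\frac{m-1}{2n}$ cancel exactly, leaving
\begin{eqnarray*}
\overline{Q}_{m,n} = 1 + H_{mn} - H_m - H_n,
\end{eqnarray*}
which is the claimed formula.

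Since every analytic obstacle has already been cleared — the Dirichlet-type integral yielding $\overline{S}^D_{m,n}$, Page's formula, and the isospectral average of $S((U\Lambda U^\dagger)_\diag)$ — there is no genuine difficulty left: the content of this theorem is a single algebraic cancellation. If I wanted an independent check that does not lean on Theorem~\ref{th:ave-coh}, I would instead average the known isospectral identity $\int S((U\Lambda U^\dagger)_\diag)\,\dif\mu_{\mathrm{Haar}}(U) = H_m - 1 + Q(\Lambda)$ over $P_{m,n}(\Lambda)$. By unitary invariance of $\mathbf{P}(\rho)$ the left-hand side averages to $\overline{S}^D_{m,n}$, giving $\overline{S}^D_{m,n} = H_m - 1 + \overline{Q}_{m,n}$, hence $\overline{Q}_{m,n} = \overline{S}^D_{m,n} - H_m + 1 = (H_{mn} - H_n) - H_m + 1$. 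This reproduces the same expression and makes transparent that the average subentropy is essentially encoded in the average diagonal entropy computed earlier.
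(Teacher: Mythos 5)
Your proposal is correct and follows essentially the same route as the paper: the paper likewise averages the isospectral identity to get $\overline{\sC}_{m,n} = H_m - 1 + \overline{Q}_{m,n} - \overline{S}_{m,n}$ and then solves for $\overline{Q}_{m,n}$ using Theorem~\ref{th:ave-coh} and Page's formula, with the $\frac{m-1}{2n}$ terms cancelling exactly as you describe. Your closing observation that $\overline{Q}_{m,n} = \overline{S}^D_{m,n} - H_m + 1$ is a clean equivalent restatement, not a genuinely different argument.
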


\begin{remark}
Note that in \cite{Zhang2015} (please find the meaning of
corresponding notations therein), the authors have obtained that
\begin{eqnarray}
\cI^Q_m(n-m+1,1) =
\frac1{mn}\sum^{m-1}_{k=0}\frac{(-1)^k\Gamma(m+n-k)}{k!\Gamma(m-k)\Gamma(n-k)}\Br{\psi(mn+1)-\psi(m+n-k)}.
\end{eqnarray}
In our notation here, $\cI^Q_m(n-m+1,1)
=\overline{Q}_{m,n}=1+H_{mn}-H_m-H_n$. The above compact form of the
average subentropy of random mixed quantum states can be rewritten
as
\begin{eqnarray}
\overline{Q}_{m,n} = (1-\gamma_{\mathrm{Euler}}) - (a_m+a_n-a_{mn}),
\end{eqnarray}
where $a_k:=H_k-\ln k -\gamma_{\mathrm{Euler}}$. We see that
$\lim_{k\to\infty}a_k=0$. Moreover, since $m\leqslant n$, it follows
that $\lim_{m\to\infty}(a_m+a_n-a_{mn})=0$. Therefore we obtain that
\begin{eqnarray}
\lim_{m\to\infty}\overline{Q}_{m,n} = 1-\gamma_{\mathrm{Euler}}.
\end{eqnarray}
Here we have given an analytical proof about the fact that the
subentropy is a nonlinear function of random mixed quantum states,
in higher dimensional space, the average subentropy approaches the
maximal value of the subentropy where it is taken at only maximally
mixed state. This amounts to say, in higher dimensional space the
following identity holds approximately
\begin{eqnarray}
Q\Pa{\int \rho\mathbf{P}(\rho)[\dif\rho]} \simeq \int
Q(\rho)\mathbf{P}(\rho)[\dif\rho].
\end{eqnarray}
\end{remark}

\begin{remark}
We make remark here about approximations of $\overline{S}_{m,n}$ and
$\overline{Q}_{m,n}$. We know that $S_{\max}=\ln m$ and
$Q_{\max}=(1-\gamma_{\mathrm{Euler}})- a_m\doteq
1-\gamma_{\mathrm{Euler}}$ \cite{Jozsa1994}. Then
\begin{eqnarray}
S_{\max} - \overline{S}_{m,n} &=& \frac{m-1}{2n}+(a_n-a_{mn}),\\
Q_{\max} - \overline{Q}_{m,n} &=& a_n-a_{mn}.
\end{eqnarray}
Thus for the ratio $\frac mn$ fixed, $S_{\max} - \overline{S}_{m,n}$
approaches a nonzero constant, whereas $Q_{\max} -
\overline{Q}_{m,n}$ approaches zero. However, this might not be too
surprising, since $S_{\max} = \ln m$ grows indefinitely with $m$,
whereas $Q_{\max}$ is a constant. Thus the relative errors of
$S_{\max}$ and of $Q_{\max}$ as approximations for the mean values
$\overline{S}_{m,n}$ and $\overline{Q}_{m,n}$ both tend to zero for
large $m$:
\begin{eqnarray}
\frac{S_{\max} - \overline{S}_{m,n}}{S_{\max}} &=& \frac{m-1}{2n\ln m}+\frac{a_n-a_{mn}}{\ln m}\sim \frac{m}{\ln m}\cdot\frac1{2n},\\
\frac{Q_{\max} - \overline{Q}_{m,n}}{Q_{\max}}
&=&\frac{a_n-a_{mn}}{1-\gamma_{\mathrm{Euler}}-a_m}\sim
\frac{1}{1-\gamma_{\mathrm{Euler}}}\cdot\frac1{2n}.
\end{eqnarray}
For fixed $m$, both of these go to zero inversely with $n$ when $n$
is taken to infinity.  However, for fixed ratio $\frac mn$, the
relative error of $S_{\max}$ goes to zero slower (as the inverse of
the logarithm of $m$ or $n$) than the relative error of $Q_{\max}$
(as the inverse of $m$ or $n$), so in that sense for large $m$ and
fixed $\frac mn$, $Q_{\max}$ is a relatively better approximation
for $\overline{Q}_{m,n}$ than $S_{\max}$ is for
$\overline{S}_{m,n}$.
\end{remark}
The typicality of coherence is already established in
\cite{Zhang2015} without the closed-form of the average coherence.
For completeness, we include it here.

\begin{thrm}[Typicality of coherence]\label{th:typ-coh}
Let $\rho$ be a random mixed state on an $m$-dimensional Hilbert
space, where $m\geqslant 3$, induced via partial-tracing of the
Haar-distributed bipartite pure states on $mn$-dimensional Hilbert
space. Then for all positive scalars $\varepsilon>0$, the coherence
$\sC_{m,n}(\rho)$ of $\rho$ satisfies the following inequality:
\begin{eqnarray}
\mathbb{P}\Set{\abs{\sC_{m,n}(\rho) -
\frac{m-1}{2n}}>\varepsilon}\leqslant
2\exp\Pa{-\frac{mn\varepsilon^2}{144\pi^3\ln2(\ln m)^2}}.
\end{eqnarray}
\end{thrm}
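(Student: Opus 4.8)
The plan is to establish the concentration inequality via Levy's lemma (the concentration of measure phenomenon on the unit sphere), exploiting the fact that a random mixed state $\rho$ arises as the partial trace of a Haar-random pure state $\ket{Z}$ living on the unit sphere of $\complex^{mn}$. First I would observe that the coherence $\sC_{m,n}(\rho)$ can be regarded as a function $f$ on the unit sphere $\sphere{\complex^{mn}}$ by writing $f(\ket{Z}) = \sC_{m,n}\Pa{\Ptr{\complex^n}{\out{Z}{Z}}}$. Since the Haar measure on pure bipartite states pushes forward (under partial tracing) to exactly the induced measure $\mathbf{P}(\rho)$, the mean of $f$ over the sphere equals $\overline{\sC}_{m,n} = \frac{m-1}{2n}$ by Theorem~\ref{th:ave-coh}. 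Levy's lemma then states that for a function $f$ on $\sphere{\complex^{mn}} \cong \sphere{\real^{2mn}}$ with Lipschitz constant $\eta$ (with respect to the Euclidean metric), one has
\begin{eqnarray*}
\mathbb{P}\Set{\abs{f - \overline{f}} > \varepsilon} \leqslant 2\exp\Pa{-\frac{C\,(2mn)\,\varepsilon^2}{\eta^2}}
\end{eqnarray*}
for an absolute constant $C$ (one standard choice gives $C = \frac{1}{9\pi^3}$ up to the dimension convention). Matching this against the stated bound then amounts to showing that an admissible Lipschitz constant is $\eta = 2\sqrt{2}\,(\ln m)/\sqrt{\ln 2}$, so that $\eta^2 = 8(\ln m)^2/\ln 2$ and the exponent becomes $-\frac{mn\varepsilon^2}{144\pi^3\ln 2\,(\ln m)^2}$ after absorbing the numerical constants.

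The technical heart of the argument is therefore the \emph{Lipschitz estimate} for $f$. I would break this into two pieces. First, the map $\ket{Z}\mapsto \rho = \Ptr{\complex^n}{\out{Z}{Z}}$ from the sphere into density matrices is Lipschitz in trace distance: a short calculation shows $\Tnorm{\rho - \rho'} \leqslant 2\Norm{\ket{Z} - \ket{Z'}}$, since partial trace is trace-norm contractive and $\out{Z}{Z} - \out{Z'}{Z'}$ has trace norm bounded by $2\Norm{\ket{Z}-\ket{Z'}}$ for unit vectors. Second, I would bound the modulus of continuity of the coherence functional $\rho \mapsto \sC_{m,n}(\rho) = S(\rho_\diag) - S(\rho)$ in terms of trace distance. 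Here one invokes a Fannes-type (continuity of von Neumann entropy) estimate: for density matrices on an $m$-dimensional space with trace distance $\delta$, the entropy difference is controlled by $\delta\ln m$ plus a small correction, and the same bound applies to the diagonal parts since $\rho\mapsto\rho_\diag$ is itself a trace-contraction. Combining both diagonal and off-diagonal contributions yields a Lipschitz bound of the form $\abs{\sC_{m,n}(\rho)-\sC_{m,n}(\rho')} \leqslant (\text{const})\cdot\ln m\cdot\Norm{\ket{Z}-\ket{Z'}}$, the $\ln m$ factor being exactly what produces the $(\ln m)^2$ in the denominator of the final exponent.

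The main obstacle I anticipate is pinning down the Lipschitz constant with enough care to land on the \emph{precise} numerical factor $144\pi^3\ln 2$ appearing in the statement. The constant $C$ in Levy's lemma is convention-dependent, and the Fannes inequality comes in several sharp and non-sharp variants (the classical Fannes bound, the Audenaert optimal version, the Fannes--Audenaert refinement); the correct choice and the handling of the subadditive correction term determine whether the entropy modulus of continuity is $\delta\ln m$ or $\delta\ln(m/\delta)$, which in turn affects both the constant and whether one needs to restrict to $\varepsilon$ in a certain range. I would proceed by fixing the specific version of Levy's lemma used in the quantum-information literature (e.g. the formulation in \cite{Hayden2006}), tracking the factor of $2$ coming from the real dimension $2mn$ of the complex sphere, and then arithmetically combining $C$, the partial-trace Lipschitz factor of $2$, and the entropy Lipschitz factor proportional to $\ln m$; the requirement $m\geqslant 3$ presumably enters precisely to make $\ln m$ bounded away from $0$ so the estimate is nonvacuous. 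Once these constants are assembled, the stated inequality follows directly.
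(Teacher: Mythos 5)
The first thing to note is that this paper does not actually prove Theorem~\ref{th:typ-coh}: the sentence immediately preceding it says the typicality of coherence ``is already established in \cite{Zhang2015}'' and that the statement is included ``for completeness,'' the only new ingredient being that the concentration point can now be written in the closed form $\frac{m-1}{2n}$ via Theorem~\ref{th:ave-coh}. So there is no in-paper argument to compare yours against; what you have reconstructed is, in outline, exactly the strategy of the cited sources \cite{Zhang2015,szp,Hayden2006}: lift $\sC_{m,n}$ to a function on the unit sphere of $\complex^{mn}$, identify its spherical mean with $\overline{\sC}_{m,n}=\frac{m-1}{2n}$, and apply Levy's lemma with a Lipschitz constant of order $\ln m$. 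Your preliminary reductions are sound --- in particular $\Tnorm{\out{Z}{Z}-\out{Z'}{Z'}}=2\sqrt{1-\abs{\iinner{Z}{Z'}}^2}\leqslant 2\Norm{\ket{Z}-\ket{Z'}}$ for unit vectors, and $\rho\mapsto\rho_\diag$ is indeed a trace-norm contraction.

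As a self-contained proof, however, your proposal has one genuine gap, precisely in the step you flag as the technical heart. A Fannes-type continuity estimate is \emph{not} a Lipschitz bound: every sharp version carries a binary-entropy correction $H(\delta)=-\delta\ln\delta-(1-\delta)\ln(1-\delta)$ whose derivative diverges as $\delta\to0$, so the resulting modulus of continuity is of the form $\delta\ln(m/\delta)$ and cannot be converted into $\abs{f(\ket{Z})-f(\ket{Z'})}\leqslant\eta\Norm{\ket{Z}-\ket{Z'}}$ with $\eta\propto\ln m$ by the route you describe. This is not a convention-dependent nuisance but the reason the literature (e.g.\ \cite{Hayden2006}, and following it \cite{szp,Zhang2015}) bounds the gradient of the entropy functional on the sphere \emph{directly}, obtaining a true Lipschitz constant of order $\ln m$ valid for $m\geqslant 3$ (this, not nonvacuousness of the estimate, is where the hypothesis $m\geqslant3$ enters). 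Relatedly, the entire content of the theorem beyond ``concentration happens'' is the explicit constant $144\pi^3\ln 2\,(\ln m)^2$, and your sketch defers exactly the bookkeeping --- the normalization of Levy's lemma, the real dimension $2mn$, the factor $2$ from the partial-trace step, and the fact that \emph{both} terms of $\sC_{m,n}=S(\rho_\diag)-S(\rho)$ contribute to $\eta$ --- that produces it; as written, your tentative $\eta=2\sqrt{2}(\ln m)/\sqrt{\ln 2}$ does not combine with the standard $(9\pi^3\ln 2)^{-1}$ normalization to give the stated exponent. Until the gradient bound is proved and that arithmetic is carried through, the inequality is not established.
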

From the above result, we can see that the entropy difference
$S(\rho_\diag)-S(\rho)$ is centered around the fraction
$\frac{m-1}{2n}$ except a set of exponential small probability
whenever the dimension of system under consideration is large
enough. This explains quantitatively why the diagonal part of random
mixed quantum states being more disorder than the eigenvalue.

\section{Extension to mixing of random mixed quantum
states}\label{sect:mixing}

In this section, we consider the following problem: We choose
arbitrary two Haar-distributed bipartite states $\ket{\psi_1}$ and
$\ket{\psi_2}$ from $\complex^m\ot\complex^n(m\leqslant n)$. Choose
uniformly a weight $w\in[0,1]$. There exist two $m\times n$ complex
matrices $Z_1,Z_2$ such that
$\sqrt{w}\ket{\psi_1}=\ket{Z_1},\sqrt{1-w}\ket{\psi_2}=\ket{Z_2}$.
we can form a new state
$w\out{\psi_1}{\psi_1}+(1-w)\out{\psi_2}{\psi_2}$ from
$\out{\psi_1}{\psi_1}$ and $\out{\psi_2}{\psi_2}$. By partial
tracing over the second system space $\complex^n$, we get a random
mixed quantum state $\sigma=Z_1Z^\dagger_1+Z_2Z^\dagger_2$. In fact,
random quantum state ensemble $\rho_{\mathrm{mix}}$ can be realized
as Wishart matrix ensemble. Let $Z=[Z_1\ Z_2]$, which is clearly a
matrix of size $m\times (2n)$. Thus $\sigma=ZZ^\dagger$. From the
previous discussion, we see that the distribution of matrix elements
of $\sigma$ is given by
\begin{eqnarray}
\mathbf{P}(\sigma)\propto \delta(1-\Tr{\sigma})\det^{2n-m}(\sigma)
\end{eqnarray}
and the distribution of the diagonal part of $\sigma$ is given by
\begin{eqnarray}
\mathbf{p}\Pa{\sigma_\diag}\propto
\delta\Pa{1-\sum^m_{j=1}\sigma_{jj}}\prod^m_{j=1}\sigma^{2n-1}_{jj}.
\end{eqnarray}
Denote by $\cE_1$ the random mixed quantum state ensemble obtained
by partial tracing over the second system space $\complex^n$ of
Haar-distributed bipartite pure states; $\cE_2$ arbitrary
probabilistic mixing of \emph{two} random chosen quantum
states;...;$\cE_k$ arbitrary probabilistic mixing of $k$ random
chosen quantum states, where $k$ is an arbitrary positive integer.
Let $\overline{\sC}_{m,n}(\cE_k)$ be the average coherence of
ensemble $\cE_k$ over $\complex^m\ot\complex^n$. With these
notations, we see that random state $\sigma$ is from ensemble
$\cE_2$, thus
\begin{eqnarray}
\overline{\sC}_{m,n}(\cE_2) = \int[\dif
\sigma]\mathbf{P}(\sigma)(S(\sigma_\diag) - S(\sigma)) =
\frac{m-1}{4n}.
\end{eqnarray}
Therefore we can summarize our main results in the present paper as:
\begin{eqnarray}\label{eq:h-kth}
\overline{\sC}_{m,n}(\cE_k) =
\frac1k\cdot\frac{m-1}{2n}=\frac1k\overline{\sC}_{m,n}(\cE_1).
\end{eqnarray}
Note that \eqref{eq:h-1th} is just here
$\overline{\sC}_{m,n}(\cE_1)$, a special case where $k=1$ of
\eqref{eq:h-kth}. We see that mixing of random states changes the
way of distribution of diagonal parts and eigenvalues as well,
respectively. An interpretation of \eqref{eq:h-kth} maybe is: For
fixed $m,n$, when mixing times $k$ is larger, the average coherence
is less. This suggest one also that if one would like to enhance
quantum coherence of quantum states, then one need distill coherent
part by dropping the incoherent part of quantum states.

We also see that \eqref{eq:h-kth} confirms in the probabilistic
sense that the convexity requirement for coherence monotone is
reasonable. That is, mixing of quantum states decreases coherence.

\section{Concluding remarks}\label{sect:conclusion}

In this paper we have analyzed the properties of the reduced density
matrices obtained from a suitable ensemble of pure states, we spend
very few pages to extends our results concerning the statistical
behavior of quantum coherence and subentropy that are obtained by
much effort in \cite{Zhang2015}. The main contributions of this
paper are that we give a new approach to get the compact formulae
for the average coherence and the average subentropy. In the future
research, we will study the distribution function of quantum
coherence in order to get more elaborate results on coherence via
the method used in \cite{Nadal2010,Nadal2011}. We hope that the
methods and results in this paper can provide new light over the
related problems in quantum information theory.

\subsubsection*{Acknowledgements}
L.Z. would like to thank Don N. Page for his comments on the
approximations for the average entropy and subentropy. This work is
supported by Natural Science Foundation of Zhejiang Province of
China (LY17A010027) and by National Natural Science Foundation of
China (Nos.11301124 \& 61673145).

\end{document}